\let\ps@plain\ps@fancy
\newtheorem{lemma}{Lemma}[section]
\newtheorem{theorem}[lemma]{Theorem}
\theoremstyle{definition}
\newtheorem{definition}[lemma]{Definition}
\newtheorem{question}[lemma]{Question}
\theoremstyle{remark}
\newtheorem{remark}[lemma]{Remark}
\newcommand{\p}{\prime}
\def\N{\mathbb{N}}
\def\Z{\mathbb{Z}}
\def\prof{\mathcal{P}}
\def\sym{\mathcal{S}}
\def\M{\mathcal{M}}
\def\T{\mathcal{T}}
\newcommand\citep{\cite}
\newcommand{\numberedsection}[1]{\section{#1}}
\begin{document}

\title{Impossibility results on stability\\ of phylogenetic consensus methods}

\author{Emanuele Delucchi}
\author{Linard Hoessly}
\author{Giovanni Paolini}

\subjclass{}
\keywords{}

\begin{abstract} We answer two questions raised by Bryant, Francis and Steel in their work on consensus methods in phylogenetics.
Consensus methods apply to every practical instance where it is desired to aggregate a set of given phylogenetic trees (say, gene evolution trees) into a resulting, ``consensus'' tree (say, a species tree). Various stability criteria have been explored in this context, seeking to model desirable consistency properties of consensus methods as the experimental data are updated (e.g., more taxa, or more trees, are mapped). However, such stability conditions can be incompatible with some basic regularity properties that are widely accepted to be essential in any meaningful consensus method.
Here, we prove that such an incompatibility does arise in the case of extension stability on binary trees and in the case of associative stability.
Our methods combine general theoretical considerations with the use of computer programs tailored to the given stability requirements.

\end{abstract}

\maketitle

\subsection*{Context}
The problem of merging the information carried by a set of phylogenetic trees into a resultant (``consensus'') tree is standard and well-studied. For instance, this problem arises as one tries to combine many gene trees in order to reconstruct a common species phylogeny, or when aggregating a set of estimates resulting from the application of different clustering algorithms to the same genomic data set. More generally, consensus methods have wide applications in biology \citep{degnan} as well as in other sciences, for example, social choice theory \citep{Arrow}.

This variety of applications has motivated a general axiomatic study of consensus methods. In biology, the field was pioneered by McMorris and collaborators, see \citep{McMorris_book} for a survey. Here, one of the research threads  is the study of ``stability conditions'' for consensus functions, which encode the requirement that a consensus method should be consistent under ``restriction'' of all input trees to a subset of taxa, see for example \citep{Barth}. For example, when computed on the branching structures induced on a specific subset of the taxa, the consensus method should output the branching structure induced by the consensus tree computed from the full data. A main question is whether such stability conditions are compatible with ``Pareto-type'' properties, where one requires that if some partial feature is shared among all trees we want to aggregate, then this feature should be present in the consensus tree as well.

\subsection*{Motivation and aim}
Our paper is motivated by two questions asked in a recent work of Bryant, Francis and Steel \citep{Steel_1}, who followed up on %
\citep{Steel_2}. In their paper, they carry out a detailed feasibility analysis of stability conditions that express ``future-proofing'' of phylogenetic trees, that is, consistency of consensus methods with respect to increase of experimental evidence. (For example, an increase of the set of taxa or an increase of the size of the set of trees from which the consensus is to be drawn.) 
Let us explain intuitively the four properties of consensus methods on which, following \citep{Steel_1}, we will focus (for precise definitions see %
Section \ref{sec:BG}).
\begin{itemize}
	\item[] {\bf Regularity} properties ensure that the output does not depend on the naming of taxa nor on the order of the trees. Moreover, if all input trees are equal, then the consensus tree should also be equal to the input trees.
	\item[] {\bf Extension stability} requires that, if the input data is updated by including a new taxon in each tree, the branching structure among the ``original'' taxa is preserved in the updated consensus tree. 
	\item[] {\bf Associative stability} allows, among other things, to reduce the computation of the consensus tree to a series of consensus problems between pairs of trees.
\end{itemize}
As is usual, these properties are considered together with a {Pareto}-type property which, again following \citep{Steel_1}, we take to be {\bf Pareto on rooted triples}. This means that if all input trees display the same nontrivial branching order when restricted to a specific triple of taxa, then the consensus tree must display the same branching order when restricted to the same taxa.

Bryant, Francis, and Steel conclude by stating two main open questions about the existence of consensus methods \citep[Concluding comments]{Steel_1}.  The first question asks whether there exist regular consensus methods that are extension stable when the input data are restricted to binary trees.
The second question asks whether there exist regular consensus methods that are Pareto on rooted triples and associatively stable. %
We answer both questions in the negative. 

Our proof consists in a reduction to a problem of integer linear programming (ILP), which we then solve. ILP has been used in the past in order to compute consensus trees, see \cite{dong2010}; however, ours seems to be the first application of ILP to a (non)existence proof in phylogenetics.

\numberedsection{Background}\label{sec:BG}
\subsection{Phylogenetic trees}\label{sec:pt}
Our setup mostly follows \citep{steel_book16},  and in particular we restrict our attention to \emph{rooted phylogenetic trees}. We fix a set (say, of taxa) $X$ and write $RP(X)$ for the set of rooted phylogenetic trees on the leaf set $X$.
A \emph{cluster} of a tree is any set of leaves that consists of all descendants of a particular vertex of the given tree. The set of clusters of a tree forms a hierarchy
(We call {\em hierarchy} any family of subsets of a given set such that any two elements in the family intersect trivially, that is, their intersection is either empty or equal to one of the two sets).
For every hierarchy on a set $X$ that contains $X$ itself and all singleton sets, but does not contain the empty set, there is a unique phylogenetic tree whose clusters form the given hierarchy.
In particular, two trees have the same associated hierarchy if and only if they are equivalent.

We say a tree $T^\p\in RP(X)$ \textit{refines} a tree $T\in RP(X)$, and write $T\preceq T^\p$, if the hierarchy of $T$ is contained in that of $T^\p$ (this means that every cluster of $T$ is also a cluster of $T^\p$).
This defines a partially ordered set $(RP(X),\preceq)$ whose maximal elements are given by the binary trees and whose unique minimal element is the ``star'' tree, where every leaf is adjacent to the root (the hierarchy of the star tree consists only of the singletons and $X$ itself).
Given a tree $T\in RP(X)$ and a subset $Y\subset X$, the restriction of $T$ to $Y$ is the tree $T|_Y\in RP(Y)$ obtained by restricting $T$ to the leaves in $Y$ (see Figure \ref{fig:restriction}).
\begin{figure}
	\begin{tikzpicture}[x=.9em,y=1em]
	\node (L) at (-7,1) {$T$:};
	\node (1) at (-5,0) {};
	\node (2) at (-3,0) {};
	\node (3) at (-1,0) {};
	\node (4) at (1,0) {};
	\node (5) at (3,0) {};
	\node (6) at (5,0) {};
	\node (1a) at (-5,-.7) {$1$};
	\node (2a) at (-3,-.7) {$2$};
	\node (3a) at (-1,-.7) {$3$};
	\node (4a) at (1,-.7) {$4$};
	\node (5a) at (3,-.7) {$5$};
	\node (6a) at (5,-.7) {$6$};
	\node (A) at (0,1) {};
	\node (B) at (-1,2) {};
	\node (C) at (-2,3) {};
	\node (D) at (0,5) {};
	\node (E) at (4,1) {};
	\draw (1.center) -- (C.center) -- (4.center);
	\draw (2.center) -- (B.center);
	\draw (3.center) -- (A.center);
	\draw (5.center) -- (E.center) -- (6.center);
	\draw (C.center) -- (D.center) -- (E.center);
	\end{tikzpicture}\quad
	\begin{tikzpicture}[x=.9em,y=1em]
	\node (L) at (-5,1) {$T\vert_{\{2,3,6\}}$:};
	\node (2) at (-2,0) {};
	\node (3) at (0,0) {};
	\node (6) at (2,0) {};
	\node (2a) at (-2,-.7) {$2$};
	\node (3a) at (0,-.7) {$3$};
	\node (6a) at (2,-.7) {$6$};
	\node (A) at (-1,1) {};
	\node (B) at (0,2) {};
	\draw (2.center) -- (A.center) -- (3.center);
	\draw (A.center) -- (B.center) -- (6.center);
	\end{tikzpicture}\quad
	\begin{tikzpicture}[x=.9em,y=1em]
	\node (L) at (-6,1) {$T\vert_{\{1,2,3,5\}}$:};
	\node (1) at (-3,0) {};
	\node (2) at (-1,0) {};
	\node (3) at (1,0) {};
	\node (5) at (3,0) {};
	\node (1a) at (-3,-.7) {$1$};
	\node (2a) at (-1,-.7) {$2$};
	\node (3a) at (1,-.7) {$3$};
	\node (5a) at (3,-.7) {$5$};
	\node (A) at (0,1) {};
	\node (B) at (-1,2) {};
	\node (C) at (0,3) {};
	\draw (2.center) -- (A.center) -- (3.center);
	\draw (1.center) -- (B.center) -- (A.center);
	\draw (B.center) -- (C.center) -- (5.center);
	\end{tikzpicture}
	\caption{A tree $T$ on the set of taxa $X=\{1,2,3,4,5,6\}$ and its restrictions $T\vert_Y$ for $Y=\{2,3,6\}$ and $Y=\{1,2,3,5\}$.}
	\label{fig:restriction}
\end{figure}
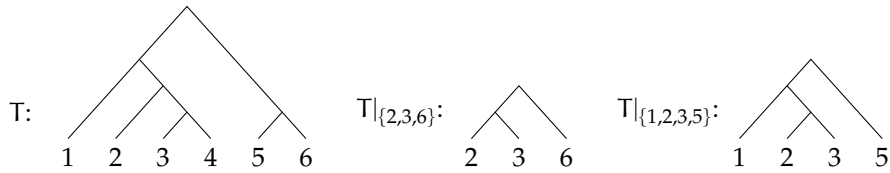
If $|Y|=3$ and $T|_Y$ is binary, we say that the \emph{rooted triple} $T|_Y$ is displayed by $T$. Notice that any rooted phylogenetic tree is fully determined by its set of rooted triples.

\subsection{Consensus functions and consensus methods}
\label{sec:consensus}

\begin{definition}\label{df:1234}
	Let $X$ be a finite set of taxa and $k \in \N$ a natural number.
	A \emph{profile} of trees is an element $(T_1,\cdots,T_k)\in RP(X)^k$.
	\begin{enumerate}
		\item A \textit{k-consensus function on $X$} is a function
		$$\varphi_X^k:RP(X)^k\to RP(X).$$
		\item A \textit{consensus function on $X$} is a function
		$$\varphi_X:\cup_{k\geq 1}RP(X)^k\to RP(X).$$
		\item A \textit{k-consensus method} is a function that, for every set $Y$ of taxa, associates with any profile $(T_1,\cdots,T_k)\in RP(Y)^k$
		a tree $\varphi_Y^k(T_1,\cdots,T_k)\in RP(Y)$.
		We consider such a method as a set of functions $\varphi_Y^k$, one for every $Y$, and denote it simply by $\varphi^k$.
		\item A \textit{consensus method} is a function that, for every set of taxa $Y$ and any $k\in \N$, associates with any profile $(T_1,\cdots,T_k)\in RP(Y)^k$
		a tree in $\varphi_Y(T_1,\cdots,T_k)\in RP(Y)$.
		We consider such a method as a set of $k$-consensus methods $\varphi^k$, one for every $k$, and denote it by $\varphi$.
	\end{enumerate}
\end{definition}

We will at times need to consider ($k$-)consensus methods where the only allowed sets of taxa are the subsets of a given finite set $X$: in this case, we will speak of a \emph{($k$-)consensus method on $X$}.

To summarize the terminology set-up: ``functions'' are rules that apply only to a fixed set of taxa, while ``methods'' do not have this restriction. The length of an input profile is not restricted, unless a prefix is added (as in "$k$-consensus").

\begin{remark}[Related definitions in the literature]
	Our use of the term ``consensus method'' conforms to  \citep{Bryant_classif,Steel_1}, whereas the term ``consensus function on $X$'' matches \citep{steel_book16}. However, the terminology is not completely consistent throughout the literature: in particular, we remark on some instances where the objects we introduced in Definition \ref{df:1234} appear under different names. In Day and McMorris' book \citep{McMorris_book}, our $k$-consensus functions on $X$ and consensus functions on $X$ are called consensus rules and complete consensus rules, respectively. In addition, the ``supertrees'' treated in  \citep{Steel_2} are analogues of consensus methods where one does not require the leaf sets of all trees in a profile to coincide.
\end{remark}

\subsection{Axiomatic requirements for consensus methods}\label{sec:reg}
We now recall some general axioms for consensus methods.
We follow \citep{Steel_1} and call a consensus method $\varphi$ \textbf{regular} if it satisfies the following three axioms:
\begin{enumerate}
	\item \textbf{Unanimity.} The value of $\varphi$ on any profile consisting of a $k$-fold repetition of a single tree $T$ is $T$ itself.

	\item \textbf{Anonymity.} Changing the order of the trees in a profile does not affect the value of $\varphi$ on it.
	\item \textbf{Neutrality.} Changing the labels on the leaves of the trees in a profile simply relabels the leaves of the consensus tree in the same way.
\end{enumerate}

Furthermore we say that a consensus method $\varphi$ on a set $X$ of taxa is \textbf{Pareto on rooted triples} if the following condition is satisfied for all $Y\subseteq X$ and all trees $T_1,\ldots,T_k\in RP(X)$, $T'\in RP(Y)$:
$$
\textrm{if } T'\preceq T_i\vert_Y \textrm{ for all }i=1,\ldots k, 
\textrm{ then } T'\preceq \varphi_X^k(T_1,\ldots,T_k)\vert_Y .
$$
An equivalent rephrasing in %
more colloquial terminology 
is that any rooted triple that is displayed by a set of trees must be displayed by their consensus tree as well.

\subsection{Examples}
\label{sec:examples}
In the following we mention a few consensus methods that have appeared in the literature, witnessing the existence of methods that do satisfy several combinations of the above-mentioned properties. In order to describe them precisely, we will make use of the characterization of trees by means of hierarchies of clusters, see above.
For more examples of consensus methods we refer to \citep{Bryant_classif,McMorris_book,steel_book16}. An overview is given in \cite[Figure 2]{Bryant_classif}.

A first class of consensus methods determines the hierarchy of clusters of the consensus tree based on the frequency of appearance of those clusters in the input trees. 
\textbf{Majority rule}, which is probably the most widely used consensus method in practice, returns the tree determined by the hierarchy of all clusters that appear in more than half of the input trees.
\textbf{Strict consensus} returns the tree given by the clusters that appear in every input tree, while \textbf{loose consensus} returns the tree defined by the set of all clusters that appear in at least one input tree and are compatible with the other input trees.

A second type of examples is of recursive nature. The idea is to associate to each profile of trees $\prof =(T_1,\cdots, T_k)\in RP(X)$ a partition $\Pi(\prof)$ of $X$ whose blocks will form the maximal clusters of the returned consensus tree. (A \emph{block} of a partition is simply one of its elements, that is, a subset of $X$.)
Then, for every block $B$ of $\Pi(\prof)$, one computes the partition $\Pi(\prof\vert_B)$ associated with the profile restricted to $B$, and so forth recursively.
The union of the blocks of all partitions is then the hierarchy of the consensus tree.
\textbf{Adams consensus} \citep{Adams} defines $\Pi(\prof)$ as the set of nonempty intersections of the maximal clusters of the trees in $\prof$. In \textbf{Aho consensus} \citep{Aho} (which is called local consensus in \citep{Bryant_classif}) the partition $\Pi(\prof)$ is the set of the connected components of the graph with vertex set $X$ and where a pair of vertices $\{a,b\}\subseteq X $ is joined by an edge if there is some $c\in X$ for which the rooted triple defined by the hierarchy $\{\{a,b\}, \{a\},\{b\},\{c\}\}$ is displayed by all trees in $\prof$.

All these consensus methods are regular. Aho consensus and Adams consensus are Pareto on rooted triples. As we will discuss later on, strict consensus is associatively stable (see Definition \ref{def:AS}), whereas majority consensus, loose consensus, Aho consensus, and Adams consensus are not \citep{Steel_1}.

\numberedsection{Results and methodology}
\label{sec:ReMe}

\subsection{Extension stability}\label{sec:ES}

This section focuses on consensus methods that satisfy the following condition, defined in \citep{Steel_2} and meant to encode the fact that a consensus should behave consistently with respect to passing to subsets of taxa.

\begin{definition}\label{df:ExSt}
	Fix a positive integer $k$, and let $\{\varphi^k_Y\}_{\emptyset\neq Y\subseteq X}$ be a $k$-consensus method on a set $X$ of taxa.
	This $k$-consensus method on $X$ is called {\bf extension stable} if, for all nonempty subsets $Y\subseteq X^\p\subseteq X$ and every profile $(T_1,\ldots,T_k)\in RP(X^\p)$,
	$$
	\varphi_Y^k(T_1\vert_Y,\ldots,T_k\vert_Y)\preceq
	\varphi_{X^\p}^k(T_1,\ldots,T_k)\vert_Y.
	$$
	A $k$-consensus method $\varphi^k$ is extension stable if, for all $X$, the $k$-consensus method $\{\varphi^k_Y\}_{\emptyset\neq Y\subseteq X}$ on $X$ is extension stable.
	A consensus method $\varphi$ is extension stable if for all $k\geq 1$ the $k$-consensus method $\varphi^k$ is extension stable.
\end{definition}

One of the main results of 
\citep{Steel_1} is that {\em no regular $2$-consensus method is extension stable}.
In the same paper, the feasibility of different relaxations of extension stability was discussed. The first question left open in \citep{Steel_1} is about extension stability under the restriction of the domain of consensus methods to binary trees. More precisely, fix a set $X$ of taxa and let $RBP(X)\subseteq RP(X)$ denote the subset of all rooted binary phylogenetic trees -- that is, the phylogenetic trees on $X$ where every internal vertex has exactly two children.

\begin{definition}
	A $k$-consensus method $\varphi^k_{\ast}$ on $X$ (resp.\ $k$-consensus method $\varphi^k$, consensus method $\varphi$) is {\em extension stable on binary trees} if the method obtained by restricting each  $\varphi^k_{\ast}$ (resp.\ $\varphi^k$, $\varphi$)  to the set of binary phylogenetic trees is extension stable (in the sense of Definition \ref{df:ExSt}, replacing $RP(X^\p)$ with $RBP(X^\p)$).
\end{definition}

\begin{question}[\cite{Steel_1}]\label{Q1} Is there a regular consensus method that is extension stable on binary trees? 
\end{question}

\begin{theorem}\label{thm:ES}
	There is no regular extension stable $k$-consensus method among profiles of binary trees on more than $4$ taxa, for any even profile size $k$. In particular, there is no regular and extension stable consensus method on profiles of binary trees.
\end{theorem}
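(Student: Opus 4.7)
My plan is to reduce to a finite base case and then derive a contradiction from the interplay of neutrality/anonymity with extension stability on a carefully chosen symmetric profile.

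\emph{Two reductions.} First, the restriction of any regular and extension stable $k$-consensus method on binary trees on some $X$ with $|X|\ge 5$ to the subsets of a fixed $5$-element $X'\subseteq X$ is again a regular and extension stable $k$-consensus method on $X'$; this reduces the theorem to the case $|X|=5$. Second, given such a $\varphi^k$ with $k=2m$, the assignment
$$
\psi^2(T_1,T_2)\ :=\ \varphi^k(\underbrace{T_1,\ldots,T_1}_{m},\underbrace{T_2,\ldots,T_2}_{m})
$$
defines a regular and extension stable $2$-consensus method on binary trees: unanimity, anonymity and neutrality descend directly, and the extension stability inequality for $\psi^2$ is a special case of the one for $\varphi^k$. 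So it is enough to prove the statement for $|X|=5$ and $k=2$.

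\emph{The base case.} Fix $X=\{1,2,3,4,5\}$ and suppose, toward a contradiction, that $\varphi^2$ is a regular extension stable $2$-consensus method on $RBP(X)^2$. The strategy is to exhibit a pair $(T_1,T_2)\in RBP(X)^2$ for which no $S\in RP(X)$ can serve as $\varphi^2(T_1,T_2)$. Two families of constraints on $S$ are available. If the pair is exchanged by some involution $\sigma$ of $X$, then anonymity and neutrality force $S$ to be $\sigma$-invariant. And for every $Y\subseteq X$ with $T_1|_Y=T_2|_Y$, unanimity pins $\varphi^2_Y(T_1|_Y,T_2|_Y)=T_1|_Y$ down, so extension stability forces $T_1|_Y\preceq S|_Y$. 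On the $4$-subsets $Y$ with $T_1|_Y\ne T_2|_Y$, the map $\varphi^2_Y$ is itself a regular extension stable $2$-consensus method on $Y$, and the same two mechanisms (now applied on $Y$) narrow the value of $\varphi^2_Y(T_1|_Y,T_2|_Y)$ down to a short list, each member of which must again refine $S|_Y$.

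\emph{Finish and main obstacle.} For a well-chosen symmetric pair $(T_1,T_2)$, the combination of $\sigma$-invariance with the family of displayed-subtree constraints leaves only a small list of candidates for $S$, and a case analysis exhibits, for each candidate, a subset $Y$ on which extension stability would have to fail --- giving the desired contradiction. The main obstacle is combinatorial rather than conceptual: locating a pair $(T_1,T_2)$ for which the constraints actually close up, and then certifying the incompatibility. With $105$ binary trees on $5$ leaves (and $15$ on $4$ leaves), even after quotienting by the natural $S_5$-action there remain many profiles to inspect, and for each one the propagation of the $3$- and $4$-subset constraints has to be carried out in full. This is exactly the step where the computer programs advertised in the abstract come in.
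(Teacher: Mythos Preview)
Your two reductions --- from $|X|\ge 5$ to $|X|=5$, and from even $k$ to $k=2$ --- are exactly the ones the paper uses, so at the structural level your proof matches the paper's.

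Where you diverge is in the base case. The paper does \emph{not} isolate a single ``bad'' pair $(T_1,T_2)$; instead it encodes the entire existence problem as an integer linear program (one boolean variable $m_{T,T_1,T_2}$ for each candidate assignment $\varphi(T_1,T_2)=T$, with linear constraints expressing unanimity, anonymity, neutrality and extension stability) and lets a solver certify global infeasibility. Your sketch aims at something sharper --- a specific symmetric profile on which the constraints already collide --- but you neither exhibit such a pair nor argue that one must exist; the infeasibility found by the ILP could in principle arise only from the interaction of constraints across several profiles. There is also a logical slip in your propagation step: when $\varphi^2_Y(T_1|_Y,T_2|_Y)$ is only known to lie in a short list, extension stability forces \emph{that one value} to refine $S|_Y$, not every member of the list; so the resulting constraint on $S$ is disjunctive, and the subsequent case analysis has to branch over the choices on each $Y$ consistently, which is considerably more work than your wording suggests. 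Since you ultimately defer the base case to the computer anyway, none of this is fatal --- but as written, the ``bad pair'' paragraph is a heuristic, not a proof, and it is not the route the paper's computation actually takes.
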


The gist of the proof is a verification by means of a computer program that there is no extension stable $2$-consensus method on sets of $5$ taxa. We will give the details of the computation in the Appendix.
The sufficiency of this verification depends on the following, easily checked fact.

\begin{remark} 
	For every positive even integer $k$, any $k$-consensus function $\varphi^{k}_X$ on a set $X$ of taxa induces a $2$-consensus function $\varphi^2_X$ on $X$ by setting
	$$
	\varphi^2_X(T_1,T_2) := \varphi^{k}_X(
	\underbrace{T_1,\ldots, T_1}_{k/2 \textrm{ times}},
	\underbrace{T_2,\ldots, T_2}_{k/2 \textrm{ times}}
	).
	$$
	Regularity and extension stability of $\varphi^{k}_X$ are inherited by $\varphi^{2}_X$.
\end{remark}

\subsection{Associative stability}\label{sec:AS}

\begin{definition}[\cite{Steel_1}]
	\label{def:AS}
	Let $\varphi$ denote a consensus method on a set of taxa $X$. We say that $\varphi$ is {\bf associatively stable} if the following equality is satisfied for all $T_1,\ldots,T_k\in RP(X)$:
	$$
	\varphi_X^k (T_1,\ldots,T_k) = \varphi^2_X(\varphi_X^{k-1}(T_1,\ldots,T_{k-1}),T_k).
	$$
\end{definition}

In \citep{Steel_1} it is noted 
that Adams consensus is associatively stable when restricted to trees of height $2$, but not for trees of height $4$, and that Aho consensus is not associatively stable even for trees of height $2$. (The {\em height} of a rooted tree is the maximum distance between the root and any leaf.)  On the other hand, associative stability is satisfied by some elementary methods such as strict consensus, which, however, fails to be Pareto on rooted triples. This motivates the following.

\begin{question}[\cite{Steel_1}] Is there a regular consensus method that satisfies associative stability and is Pareto on rooted triples?
\end{question}

\begin{theorem}
	There exists no regular, associatively stable consensus method that is Pareto on rooted triples.
	\label{thm:AS}
\end{theorem}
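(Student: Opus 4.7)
The plan is to exploit the fact that associative stability rigidly ties the behavior of the consensus method on profiles of arbitrary size to its restriction to profiles of size two. Writing $*_X$ for $\varphi^2_X$, iterated application of associative stability gives
\[
\varphi_X^k(T_1,\ldots,T_k) = \bigl((T_1 *_X T_2) *_X T_3\bigr) *_X \cdots *_X T_k.
\]
Unanimity forces $T *_X T = T$, anonymity at $k=2$ forces $T_1 *_X T_2 = T_2 *_X T_1$, and anonymity at $k=3$ combined with the displayed identity forces $(T_1 *_X T_2) *_X T_3 = (T_1 *_X T_3) *_X T_2$; together with commutativity, a short calculation then yields associativity of $*_X$. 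Thus $*_X$ endows $RP(X)$ with a semilattice structure that is equivariant under the symmetric group $\mathrm{Sym}(X)$ (by neutrality) and that satisfies the Pareto-on-triples constraint: whenever $T_1, T_2$ both display a rooted triple $r$, so does $T_1 *_X T_2$.

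Having reduced the theorem to the non-existence of such a semilattice for some fixed $X$, I would carry out a computer-assisted search on a small set of taxa (say $|X| = 4$ or $5$). The search enumerates candidate binary operations on $RP(X)$ satisfying idempotence, commutativity, $\mathrm{Sym}(X)$-equivariance and the Pareto constraint, and then checks the associativity identity $(T_1 *_X T_2) *_X T_3 = T_1 *_X (T_2 *_X T_3)$ on all ordered triples of orbit representatives. If no candidate survives, the theorem follows: any regular, Pareto, associatively stable consensus method on $X$ would induce a semilattice of the prohibited type.

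The hard part will be controlling the combinatorial size of the search space, but several symmetry-based reductions should keep it tractable. First, $\mathrm{Sym}(X)$-equivariance restricts attention to orbit representatives of unordered pairs of trees, and additionally forces $T_1 *_X T_2$ to lie in the fixed set of the joint stabilizer of $(T_1, T_2)$, which is typically small. Second, the Pareto constraint restricts the image at each pair to the sub-poset of $RP(X)$ consisting of trees whose hierarchy includes every commonly displayed rooted triple of $T_1$ and $T_2$ (note that strict consensus, the natural meet in the cluster-containment order, already fails this constraint for $|X| = 5$, so the usual ``default'' semilattice is ruled out). After these reductions only a manageable number of candidates remain, and associativity can be checked exhaustively. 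The concrete computer verification, together with an explicit certificate of non-existence --- a short profile on which every candidate is forced to contradict itself --- would be deferred to the appendix, following the same template as Theorem \ref{thm:ES}.
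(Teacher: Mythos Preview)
Your reduction to a $\mathrm{Sym}(X)$-equivariant semilattice on $RP(X)$ satisfying the Pareto-on-triples constraint, followed by a computer verification on a small $X$, is exactly the strategy the paper uses; the paper's proof of Theorem~\ref{thm:AS} proceeds via Lemma~\ref{lem:C2} and the appendix computation in precisely this way. One caution: do not bother with $|X|=4$, since Adams consensus furnishes a valid semilattice there (the paper records this in Appendix~\ref{sec:small-sets}); the first obstruction occurs at $|X|=5$. The only methodological difference is that the paper encodes the semilattice axioms as an integer linear program (using auxiliary order variables $p_{T_1,T_2}$ to express ``greatest lower bound'') and feeds it to an off-the-shelf solver, rather than enumerating candidate operations and testing associativity directly; your symmetry and Pareto pruning ideas are the same reductions the paper applies to keep the model tractable.
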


As was already remarked in \citep{Steel_1}, if $\varphi$ is a regular and associatively stable consensus method on a set of taxa $X$, then $\varphi^2_X$ is a commutative, idempotent and associative binary operation on $RP(X)$. Thus it is enough to prove that such a binary operation does not exist.

\begin{lemma}\label{lem:C2} There exists no regular, associative $2$-consensus function which is Pareto on rooted triples for any set of $5$ or more taxa.
\end{lemma}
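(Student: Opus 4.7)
The plan is to argue by contradiction: assume that a regular, associative $2$-consensus function $\varphi^2_X$ which is Pareto on rooted triples exists for some $X$ with $|X|\geq 5$. As the paragraph preceding the lemma already records, regularity together with associativity makes $\varphi^2_X$ a commutative, idempotent, associative binary operation on $RP(X)$---that is, a semilattice---and neutrality makes this semilattice equivariant under the action of $\mathrm{Sym}(X)$. Pareto on rooted triples then adds the combinatorial requirement $r(\varphi^2_X(T_1,T_2)) \supseteq r(T_1)\cap r(T_2)$, where $r(T)$ denotes the set of rooted triples displayed by $T$.

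The strategy I would follow centers on the base case $|X|=5$, where three layers of constraint apply to any tentative $\varphi^2_X$: (i)~neutrality confines $\varphi^2_X(T_1,T_2)$ to the fixed-point set of the joint stabilizer of the unordered pair $\{T_1,T_2\}$ in $\mathrm{Sym}(X)$; (ii)~Pareto restricts the output to trees whose rooted-triple set contains $r(T_1)\cap r(T_2)$; (iii)~associativity forces every threefold product $\varphi^2_X(\varphi^2_X(T_i,T_j),T_k)$ to be parenthesization-independent. Combining (i) and (ii) narrows the value of $\varphi^2_X$ on each $\mathrm{Sym}(X)$-orbit of unordered pairs to a short list of candidates; threading (iii) across a carefully chosen family of trees then produces a conflict. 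For $|X|>5$ one reduces to the $|X|=5$ case by considering only input trees in which every taxon outside a fixed $5$-element subset $Y\subseteq X$ is attached via an $\mathrm{Sym}(X\setminus Y)$-equivariant scheme (for instance, as a direct child of the root), and using neutrality to verify that the induced operation on $RP(Y)$ inherits regularity, associativity, and Pareto on rooted triples.

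The hard part is the combinatorial explosion: on $5$ taxa there are $236$ rooted phylogenetic trees (of which $105$ are binary) and a correspondingly large number of symmetry-reduced pairs, so tracking the cascading implications of (i)--(iii) by hand is infeasible. The natural approach, paralleling the strategy announced for Theorem~\ref{thm:ES}, is to encode (i)--(iii) as a constraint-satisfaction problem on symmetry-reduced pairs of trees and to verify infeasibility by a computer search; any such infeasibility certificate immediately supplies the desired contradiction.
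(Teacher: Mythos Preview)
Your overall plan matches the paper's: a computer verification of infeasibility for $|X|=5$ (the paper encodes the semilattice axioms, neutrality, and the Pareto constraint as an integer linear program and lets a solver certify infeasibility), followed by a reduction from arbitrary $|X|\geq 5$ to the five-taxon case via an embedding $T\mapsto T^X$ of $RP(Y)$ into $RP(X)$.

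There is, however, a genuine gap in your reduction step. With your proposed embedding---attaching every $x\in X\setminus Y$ as a direct child of the root, with the $Y$-tree hanging as a further child---you invoke neutrality to argue that the induced operation $\psi_Y(T_1,T_2):=\varphi_X^2(T_1^X,T_2^X)\vert_Y$ is associative. But associativity of $\psi_Y$ needs more than that: it needs $\varphi_X^2(T_1^X,T_2^X)$ to lie in the image of $(\,\cdot\,)^X$, so that $(\psi_Y(T_1,T_2))^X=\varphi_X^2(T_1^X,T_2^X)$ and one can iterate. Neutrality only gives that $\varphi_X^2(T_1^X,T_2^X)$ is $\mathrm{Sym}(X\setminus Y)$-invariant, and Pareto only gives that $Y$ is a cluster in it. That still leaves room for outputs not of the form $S^X$: for instance the tree whose root has exactly two children, one the $Y$-cluster and the other the star on $X\setminus Y$, is $\mathrm{Sym}(X\setminus Y)$-invariant, has $Y$ as a cluster, and violates no common rooted triple of the inputs, yet it is not $S^X$ for any $S$. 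Without closure of the image under $\varphi_X^2$, the computation
\[
\psi_Y(\psi_Y(T_1,T_2),T_3)=\varphi_X^2\bigl((\psi_Y(T_1,T_2))^X,\,T_3^X\bigr)\big\vert_Y
\]
cannot be simplified using associativity of $\varphi_X^2$.

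The paper avoids this by choosing a \emph{rigid} embedding rather than a symmetric one: the elements $x_1,\ldots,x_l$ of $X\setminus Y$ are attached as a caterpillar above $T$ (so $x_1$ branches off first, then $x_2$, and so on, with $T$ at the bottom). Then every three-element subset of $X$ meeting $X\setminus Y$ supports a nontrivial rooted triple common to \emph{all} $T_i^X$, and Pareto on rooted triples alone forces $\varphi_X^2(T_1^X,T_2^X)$ to reproduce the entire caterpillar scaffold---hence to equal $(\psi_Y(T_1,T_2))^X$ exactly. Associativity (and the remaining properties) of $\psi_Y$ then follow immediately. So the fix is simply to replace your symmetric attachment by the caterpillar one and to rely on Pareto rather than neutrality for closure.
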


\begin{remark} The only regular and associatively stable consensus method on $3$ and $4$ taxa, which is Pareto on rooted triples, is Adams consensus. This is discussed in the Appendix.
\end{remark}

The proof of Lemma \ref{lem:C2} rests on a computational check of the case of $5$ taxa (see Appendix).
From there, the full generality follows via the following lemma.

\begin{lemma} Fix a positive integer $k$ and a set of taxa $X$. Every regular  $k$-consensus function on $X$ which is Pareto on rooted triples and associatively stable induces a $k$-consensus function on every subset of $X$ which is also regular, Pareto on rooted triples, and associatively stable.
\end{lemma}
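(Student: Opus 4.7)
The plan is to reduce to the case where $Y$ is obtained from $X$ by deleting a single taxon, and then construct the induced consensus function on $Y$ via a canonical extension back to $X$. Writing $X\setminus Y=\{z_1,\ldots,z_m\}$, we remove taxa one at a time; provided each deletion step yields a consensus function with the same desired properties, iterating produces the induced consensus function on $Y$.

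So assume $X=Y\cup\{z\}$, and define the extension $e\colon RP(Y)\to RP(X)$ by letting $e(T)$ be the tree whose root has exactly two children: the root of $T$ and the leaf $z$. The map $e$ is injective and satisfies $e(T)\vert_Y=T$. The key step is to show that the image $E:=e(RP(Y))$ is closed under $\varphi^2_X$. For every pair of distinct $y,y'\in Y$ and every $i\in\{1,2\}$, the tree $e(T_i)$ displays the rooted triple $(y,y'\vert z)$, since the MRCA of $y,y'$ in $e(T_i)$ lies inside the $T_i$-subtree and is therefore strictly below the root. Pareto on rooted triples then forces $C:=\varphi^2_X(e(T_1),e(T_2))$ to display $(y,y'\vert z)$ for every such pair, which in turn forces $Y$ to be a cluster of $C$: if $z$ were a descendant of the MRCA $v$ of $Y$ in $C$ and lay in some child subtree $v'$ of $v$, then case analysis on whether $v'$'s subtree intersects $Y$ would produce, in each case, a pair $y,y'\in Y$ for which $C\vert_{\{y,y',z\}}$ is not the rooted triple $(y,y'\vert z)$, contradicting Pareto. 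Hence $C\in E$.

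Defining $\psi^2_Y(T_1,T_2)$ to be the unique $S\in RP(Y)$ with $e(S)=C$ yields a commutative, associative, idempotent binary operation on $RP(Y)$ that inherits unanimity and neutrality (permutations of $Y$ extend to $X$ by fixing $z$, and $e$ is equivariant) as well as Pareto on rooted triples (since $e(T_i)\vert_Z=T_i\vert_Z$ for $Z\subseteq Y$) from $\varphi^2_X$. For $k>2$, one defines $\psi^k_Y$ recursively via the associative stability formula $\psi^k_Y(T_1,\ldots,T_k):=\psi^2_Y(\psi^{k-1}_Y(T_1,\ldots,T_{k-1}),T_k)$; anonymity then follows from commutativity and associativity of $\psi^2_Y$, while unanimity and Pareto on rooted triples follow by a straightforward induction on $k$. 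The main obstacle in the proof is the cluster claim described above; everything else amounts to bookkeeping through the definitions.
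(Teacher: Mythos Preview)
Your proof is correct and follows essentially the same approach as the paper: both extend trees on $Y$ to trees on $X$ by attaching the extra taxa above the root, use Pareto on rooted triples to show that the image of this extension is closed under $\varphi_X$, and then restrict back to $Y$. The paper carries this out in a single step for arbitrary $Y\subseteq X$ (attaching all of $X\setminus Y$ as a caterpillar above the root of $T$) and defines $\psi_Y^k$ directly as $\varphi_X^k(T_1^X,\ldots,T_k^X)\vert_Y$, whereas you remove taxa one at a time and build $\psi_Y^k$ recursively from $\psi_Y^2$; these are cosmetic differences and the key closure argument is the same.
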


\begin{proof} Fix a subset $Y\subseteq X$ and an enumeration $x_1,\ldots, x_l$ of the set $X\setminus Y$.
	Given any tree $T\in RP(Y)$, define a tree $T^X \in RP(X)$ as in Figure \ref{unnumbered}.
	\begin{figure}[h]
	\begin{center}
		\begin{tikzpicture}[x=1em,y=1em]
		\node (L) at (-8,3) {$T^{X}:=$};
		\node (R) at (0,6) {};
		\node (A) at (-6,0) {$x_1$};
		\node (Z) at (6,0) {};
		\node (T) at (3,3) {};
		\node (TT) at (3,1.5) {$T$};
		\node (Q) at (0,0) {};
		\node (PP) at (2,4) {};
		\node (B) at (-4,0) {$\ldots$};
		\node (B1) at (-2.8,2) {$\ldots$};
		\node (C) at (-2,0) {$x_l$};
		\draw (C.north) -- (PP.center);
		\draw (A.north) -- (R.center) -- (Z.north);
		\draw (T.center) -- (Q.north) -- (Z.north);
		\end{tikzpicture}
	\end{center}
	\caption{}\label{unnumbered}
	\end{figure}
	Notice that, for all $T_1,T_2\in RP(Y)$, 
	\begin{equation}\label{eq:inj}
	T_1^X = T_2^X \textrm{ if and only if } T_1=T_2.
	\end{equation}
	Now, given a consensus method $\varphi$ on $X$ we can define a consensus method $\psi_Y$ on $Y$ by setting, for every positive integer $k$,
	$$
	\psi^k_Y(T_1,\ldots,T_k):=\varphi_X^k(T_1^X,\ldots,T_k^X)\vert_Y.
	$$
	We immediately observe that regularity of $\varphi$ implies regularity of $\psi$. If $\varphi^k_X$ is Pareto on rooted triples, then 
	\begin{equation*}\label{eq:PRT}
	\varphi_X^k(T_1^X,\ldots,T_k^X)=(\psi^k_Y(T_1,\ldots,T_k))^X.
	\end{equation*}
	If in addition $\varphi$ is associatively stable, we can use this equation in order to write, for every $T_1,\ldots,T_k\in RP(Y)$,
	\begin{align*}
	\psi_Y^2(\psi_Y^{k-1}(T_1,\ldots,T_{k-1}),T_k)^X 
	& %
	= \varphi_X^2(\psi_Y^{k-1}(T_1,\ldots,T_{k-1})^X,T_k^X)\\
	& %
	= \varphi_X^2(\varphi_X^{k-1}(T_1^X,\ldots,T_{k-1}^X),T_k^X) \\
	& %
	=\varphi_X^k(T_1^X,\ldots,T_k^X)
	=\psi_Y^k(T_1,\ldots,T_k)^X.
	\end{align*}
	In view of Equation \eqref{eq:inj}, this proves associative stability of $\psi$.
\end{proof}

\numberedsection{Concluding discussion}

We have answered the two main questions left open in \citep{Steel_1}, about extension stability and associative stability of consensus methods on phylogenetic trees.
On the one hand, we have proved that, under widely accepted regularity  requirements, there cannot exist any consensus method that is stable under addition of taxa, even when the input trees are required to be binary (Theorem \ref{thm:ES}). We thus strengthen the result of Bryant, Francis and Steel. The meaning of this theorem is that, no matter which method is used in order to extract a consensus from a profile of binary trees, the branching structure in the consensus tree is not guaranteed to hold once the set of available taxa is enlarged -- even if the ``augmented'' input trees agree with the original profile when restricted to the previously available taxa.
Our other main result, Theorem \ref{thm:AS}, states that there is no associatively stable consensus method that satisfies some common regularity and Pareto-type properties. 
This means that, when enlarging the set of trees from which consensus is extracted, it may not be enough to compute the consensus between the new trees and the ``old'' consensus tree, and thus one is forced to carry out the computation anew, starting from the complete profile of trees. In fact, as pointed out in \citep{Steel_1}, there do exist consensus methods that satisfy associative stability: such methods however fail to simultaneously possess both basic regularity and Pareto properties. In this light, our result can be interpreted by saying that those basic properties are intrinsically complex -- and, in particular,  the substantial computational advantage that is granted by associative stability is ``too much to hope for''. 

We note that our considerations about rooted phylogenetic trees have implications for other types of data structures. First, our impossibility results apply also to consensus among {\bf unrooted trees}, see for example \cite[Section 3.17]{Husonetal} for the relevance of this case. Given a profile of rooted trees, one can consider a profile of unrooted trees obtained from the former by appending a "special leaf" to the root of all trees in the rooted profile. From any  regular (and stable) consensus tree among the unrooted trees we can then extract a regular (and stable) consensus tree for the rooted profile by deleting the special leaf. Thus, no regular and (extension- or associatively) stable consensus method can exist for unrooted trees on more than $6$ leaves. Moreover, {\bf supertree consensus} is a generalization of tree consensus, hence any claim of non-existence of tree-consensus functions implies in particular the same claim for supertree consensus.

\bigskip

As was also remarked in \citep{Steel_1}, such negative results are valuable inasmuch as they uncover the intrinsic limitations of certain approaches, thus helping direct future research towards feasible paths. We suggest four such possible directions of further research. %
\begin{itemize}
\item[1.] Prompted by discussions with researchers at the Swiss Institute of Bioinformatics, we propose to consider single splits as the basic features of phylogenetic trees, instead of rooted triples. A split in a (rooted) tree is a bipartition of the taxa that is obtained by deleting an edge of the tree.

\noindent {\bf Q.} Are there regular consensus methods on phylogenetic trees that are Pareto-optimal with respect to single splits and satisfy extension stability (with respect to single splits) or associative stability?

\item[2.] Another way forward may involve relaxing the constraint on the data structures outputted by consensus methods. The minimum hybridization network associated with a profile of trees is a rooted phylogenetic network that ``minimally deviates from being a tree'' among those  displaying all trees in the profile. We refer to \cite[Section 11.5]{Husonetal} for a precise description, and only mention here that the deviation from a true tree form is measured by the number $h$ of reticulations of the resulting network. Trees have no reticulation, that is, $h=0$.

Now notice that a hybridization network displays every triple that is displayed by any input tree. By contracting some edges on this network we can decrease reticulation, coming closer to a tree-network, at the cost of having to ``take a stance'' by resolving some rooted triples. Our results show that certain regularity, Pareto and stability conditions cannot be satisfied by a tree (that is, a non-reticulated network). It is natural to ask how reticulated consensus networks need to be, in order to behave in a stable fashion.

\noindent {\bf Q.} Find a bound on the required number of reticulations of a regular, extension stable consensus network (for example, as a function of the number of leaves of the trees of the input profile).

\item[3.]   Since performing inference often leads to a collection of different phylogenetic trees for the set of taxa under investigation, consensus methods are mostly applied to aggregate the given data in this context \citep{Husonetal}. Hence both the extension of consensus methods towards reasonable choices of random variables as well as their axiomatic limitations are an interesting avenue for future research. We have only considered deterministic consensus methods (cf.\ Definition \ref{df:1234}), and thus our impossibility results do not apply to probabilistic consensus methods such as greedy consensus, see for example \citep{degnan}. In particular we leave it as an interesting question to determine whether there exist probabilistic consensus methods satisfying (probabilistic) analogues of Pareto and stability properties.

\item[4.] Our methods combine theoretical reductions with explicit computations, and are in principle adaptable to explore other stability conditions that appeared in the literature, for instance in \citep{Barth}. Even if such conditions are sometimes dependent on each other (for example, our Theorem \ref{thm:ES} implies incompatibility of condition $(I6)$  in \citep{Barth} with regularity assumptions, even for binary trees), this approach is in its essence case-by-case.
From a systematic point of view, we believe  that it would be interesting to shed more light on general conceptual or structural obstructions to the existence of consensus methods with given properties. A possible instance of work in this direction is the following. Recall that an associatively stable regular consensus method is equivalent to a partial order on the set of inputs (for example, phylogenetic trees on $n$ taxa) that is a meet-semilattice \cite[Chapter 1, p.~7]{Gra} and equivariant with respect to the action on the permutation group $S_n$ on the labels of taxa. The existence of such a consensus method that is Pareto on rooted triples is then equivalent to the existence of a ($S_n$-equivariant) meet-semilattice structure $\vee_{?}$ on the set of phylogenetic trees such that, for all trees $T_1$, $T_2$,
\begin{equation}\label{eq:semil}
F(T_1 \vee_{?} T_2) \geq F(T_1) \vee F(T_2)
\end{equation}
where $F: RP(X)\to \prod_{A\subseteq [n], \vert A \vert =3} RP(A)$ is the canonical function that associates with a tree $T$ the set of rooted triples $(T\vert_A)_{A\subseteq [n], \vert A \vert =3}$, and $\leq$ is the natural partial order (resp.\ $\vee$ the meet operation) in the codomain of $F$. In fact, if $\vert A \vert = 3$, $RP(A)$ has a natural semilattice structure (the only one on four elements that is not a lattice and not a chain) and this defines naturally a ``product'' semilattice structure on $\prod_{A\subseteq [n], \vert A \vert =3} RT(A)$, see for example \cite[Chapter 3]{stanley}.

\noindent {\bf Q.} Find obstructions to the existence of an $S_n$-equivariant meet-semilattice structure on $RP(X)$ that satisfies Equation \ref{eq:semil}.

\end{itemize}

\appendix

\section{Proofs}
\label{sec:proofs}

In this appendix, we describe the computer programs used to prove Theorem \ref{thm:ES} and Lemma \ref{lem:C2}.
Source code is freely available online, see \citep{phylogenetic-consensus}.
We will make use of standard terminology from group theory (see for example \citep{rotman2012introduction}).
When discussing associative stability,
we will also make use of standard terminology from the theory of partially ordered sets (see for example \citep{davey2002introduction}).

Given a finite set $X$, let us denote by $\sym(X)$ the symmetric group on $X$.
Then $\sym(X)$ acts naturally on the set $RP(X)$, by permuting the labels of the leaves.
Given a permutation $\sigma\in \sym(X)$ and a tree $T\in RP(X)$, we write $\sigma(T)$ for the tree obtained from $T$ by permuting the labels of the leaves according to $\sigma$.
In particular, this induces an action of the symmetric group on the set $RBP(X)$ of rooted binary phylogenetic trees.
The neutrality axiom can be restated by saying that the $k$-consensus functions $\varphi_X^k \colon RP(X)^k \to RP(X)$ should be equivariant with respect to the action of $\sym(X)$.

\subsection{Extension stability on binary trees}
As was discussed just after the Theorem's statement, in order to prove Theorem \ref{thm:ES} it is enough to check that there is no extension stable 2-consensus method among profiles of binary trees on a set $X$ of 5 taxa.
Since there is only a finite number of consensus methods on $X$, it is possible (at least in principle) to check every such consensus method by means of a computer program.
The number of 2-consensus functions $\varphi_X^2 \colon RBP(X)^2 \to RP(X)$ is, however, intractably large already for a set $X$ of cardinality 5.
In order to obtain an answer in a reasonable amount of time, we formulate our problem in the context of integer linear programming (see \citep{papadimitriou1998combinatorial}).

Consider the following set $\T$ of triples of phylogenetic trees:
\[ \T = \{ (T,T_1,T_2) \in RP(Y) \times RBP(Y) \times RBP(Y) \mid Y \subseteq X \}. \]
For every triple $(T,T_1,T_2) \in \T$, we introduce a boolean variable $m_{T,T_1,T_2} \in \{0,1\}$.
Denote by $\M$ the set of all these boolean variables.
A 2-consensus method $\varphi$ on $X$ (restricted to profiles of binary trees) corresponds to the following assignment of the variables in $\M$:
\[
m_{T,T_1,T_2} =
\begin{cases}
1 & \text{if } \varphi(T_1,T_2) = T; \\
0 & \text{otherwise}.
\end{cases}
\]
Conversely, an assignment of the variables in $\M$ yields a 2-consensus method on $X$ (restricted to binary trees), provided that the following linear relations are satisfied:
\[
\sum_{T \in RP(Y)} m_{T,T_1,T_2} = 1 \quad \text{for all $T_1,T_2 \in RBP(Y)$, for all $Y \subseteq X$.}
\]

Our aim is now to enforce all the requirements for our consensus method by means of linear equalities or inequalities involving the variables in $\M$.

\begin{enumerate}
	\item[(1)] Unanimity is equivalent to the following set of direct assignments:
	\[ m_{T,T,T} = 1 \quad \text{for all $T \in RBP(Y)$, for all $Y \subseteq X$}. \]
	
	\item[(2)] Anonymity is enforced as follows:
	\[ m_{T,T_1,T_2} = m_{T,T_2,T_1} \quad \text{for all $(T,T_1,T_2) \in \T$.} \]
	
	\item[(3)] Neutrality is given by:
	\[ m_{T,T_1,T_2} = m_{\sigma(T), \, \sigma(T_1), \, \sigma(T_2)} \quad \text{for all $(T,T_1,T_2) \in \T$, for all $\sigma \in \sym(X)$.} \]
	
	\item[(4)] Extension stability is slightly more complicated to encode.
	Consider any triple $(T,T_1,T_2) \in \T$, and let $Y$ be the set of leaves of $T$.
	For every subset $Z \subsetneq Y$, and for every tree $T' \in RP(Z)$ such that $T' \not\preceq T\vert_Z$, we require that
	\[ m_{T,T_1,T_2} + m_{T', \, T_1\vert_Z, \, T_2\vert_Z} \leq 1. \]
	The reason is that, if $m_{T,T_1,T_2} = m_{T', \, T_1\vert_Z, \, T_2\vert_Z} = 1$, then
	\[ \varphi(T_1\vert_Z, T_2\vert_Z) = T' \not\preceq T\vert_Z = \varphi(T_1,T_2)\vert_Z \]
	which violates extension stability.
	Conversely, a violation of extension stability translates into having $m_{T,T_1,T_2} = m_{T', \, T_1\vert_Z, \, T_2\vert_Z} = 1$ for some trees $T, T', T_1, T_2$ with $T' \not\preceq T\vert_Z$.
\end{enumerate}

\begin{remark}
	\label{rmk:optimizations}
	In a practical implementation, the equalities given in (1), (2), and (3) can be used to greatly reduce the number of variables involved in the model.
	Indeed, instead of using one boolean variable $m_{T, T_1, T_2}$ for every triple $(T,T_1,T_2) \in \T$, we use one variable for every orbit $[T, T_1, T_2] \in \T \,/\, (\sym(X) \times \Z_2)$. Here the action of $\sym(X) \times \Z_2$ on $\T$ is as follows: a permutation $\sigma \in \sym(X)$ maps $(T, T_1, T_2)$ to $(\sigma(T), \sigma(T_1), \sigma(T_2))$; the generator of $\Z_2$ maps $(T, T_1, T_2)$ to $(T, T_2, T_1)$.
	From a practical point of view, this means that we choose a representative for each orbit, and rewrite all the linear constraints in terms of variables $m_{T,T_1,T_2}$ where $(T,T_1,T_2)$ is the representative of its orbit.
	In addition, thanks to (1), we can remove all the variables of the form $m_{T,T',T'}$ for $T \neq T'$, because their value must be $0$.
	The optimizations described here are essential, in order to make the number of variables tractable.
\end{remark}

After optimizations, we obtain a model consisting of 11,688 boolean variables.
We use the solver Gurobi \citep{gurobi} to check that there exists no assignment of the variables that satisfies all the previous constraints.
Our program runs in approximately 4 minutes on a laptop with an Intel Core i7 processor ($8 \times 2.80$ GHz) and 16 GB of RAM.
This running time includes both the computation of the model and the proof of infeasibility.

\subsection{Associative stability}
\label{sec:ASproof}

Again as discussed after the Theorem's statement, in order to prove Theorem \ref{thm:AS} it is sufficient to show that there exists no regular associative 2-consensus function on a set $X$ of 5 taxa which is Pareto on rooted triples.
Just as in the case of extension stability, we formulate our problem in the context of integer linear programming.

The set of triples that we need to consider is simply $\T = RP(X)^3$ in this case.
For every triple $(T,T_1,T_2) \in \T$, we introduce a boolean variable $m_{T,T_1,T_2} \in \{0,1\}$ with the same meaning as in the previous section.
Again, denote by $\M$ the set of all these boolean variables.

As before, we need to express the fact that to every input $(T_1,T_2)$ corresponds a unique output $T$. We ensure this by requiring the following linear relations to be satisfied:
\[ \sum_{T \in RP(X)} m_{T,T_1,T_2} = 1 \quad \text{for all $T_1,T_2 \in RP(X)$}. \]
Assignments of the variables in $\M$ satisfying the previous relations are in one-to-one correspondence with 2-consensus functions $\varphi_X^2 \colon RP(X)^2 \to RP(X)$. 

Unanimity, anonymity, and associative stability of $\varphi_X^2$, are equivalent to $RP(X)$ being endowed with a partial order relation $\leq$ (not to be confused with the previously defined $\preceq$), such that every pair of trees $T_1,T_2 \in RP(X)$ has a unique greatest lower bound, given precisely by the tree $\varphi_X^2(T_1,T_2)$, see \cite[Chapter 1, p.~7]{Gra}.
Notice that, in particular, $T_1 \leq T_2$ if and only if $\varphi_X^2(T_1,T_2) = T_1$. The validity of the latter expression is represented by the value of the variable $m_{T_1,T_1,T_2}$; this leads us to introduce new variables 
\[ p_{T_1,T_2} := m_{T_1,T_1,T_2} \quad \text{for $T_1,T_2 \in RP(X)$}, \]
with the following meaning:
\[
p_{T_1,T_2} =
\begin{cases}
1 & \text{if $T_1 \leq T_2$}; \\
0 & \text{otherwise}.
\end{cases}
\]
Notice that the variables $p_{T_1,T_2}$ are simply aliases for some variables in $\M$.

We are now ready to translate all requirements for our consensus function into linear constraints.

\begin{enumerate}
	\item[(1)] Reflexive property of the partial order $\leq$:
	\[ p_{T,T} = 1 \quad \text{for all $T \in RP(X)$.} \]
	Notice that this set of assignments is equivalent to unanimity.
	
	\item[(2)] Antisymmetric property of the partial order $\leq$:
	\[ p_{T_1,T_2} + p_{T_2,T_1} \leq 1 \quad \text{for all $T_1,T_2 \in RP(X)$ with $T_1 \neq T_2$.} \]
	
	\item[(3)] Transitive property of the partial order $\leq$:
	\[ p_{T_1,T_3} \geq p_{T_1,T_2} + p_{T_2,T_3} - 1 \quad \text{for all $T_1,T_2,T_3 \in RP(X)$.} \]
	
	\item[(4)] The tree $\varphi_X^2(T_1,T_2)$ must be a lower bound of $T_1$ and $T_2$:
	\[ p_{T,T_1} \geq m_{T,T_1,T_2} \; \text{and} \; p_{T,T_2} \geq m_{T,T_1,T_2} \quad \text{for all $T,T_1,T_2 \in RP(X)$.} \]
	
	\item[(5)] The tree $\varphi_X^2(T_1,T_2)$ must be greater than every lower bound of $T_1$ and $T_2$:
	\[ m_{T,T_1,T_2} + p_{T',T_1} + p_{T',T_2} \leq p_{T',T} + 2  \quad \text{for all $T,T',T_1,T_2 \in RP(X)$.} \]
	Indeed, the only way to violate this constraint is to set $m_{T,T_1,T_2} = 1$ (that is, $\varphi_X^2(T_1,T_2) = T$), $p_{T',T_1} = p_{T',T_2} = 1$ (that is, $T'$ is a lower bound of $T_1$ and $T_2$), and $p_{T',T} = 0$ (that is, $T' \not\leq T$).
	
	\item[(6)] Neutrality:
	\[ m_{T,T_1,T_2} = m_{\sigma(T), \, \sigma(T_1), \, \sigma(T_2)} \quad \text{for all $T,T_1,T_2 \in RP(X)$.} \]
	
	\item[(7)] Pareto property on rooted triples:
	\[ m_{T,T_1,T_2} = 0 \quad \text{if $T_1\vert_Y = T_2\vert_Y \not\preceq T\vert_Y$ for some $Y\subseteq X$ with $|Y|=3$.} \]
\end{enumerate}

\begin{remark}
	As for extension stability, in our actual implementation we significantly reduce the number of variables.
	First, we only use one variable $m_{T, T_1, T_2}$ for every orbit with respect to the action of $\sym(X) \times \Z_2$, as described in Remark \ref{rmk:optimizations}.
	We also discard the variables of the form $m_{T, T', T'}$ with $T \neq T'$, because their value must be $0$.
	A further observation is that $T \not< \sigma(T)$ for every $T \in RP(X)$ and $\sigma \in \sym(X)$: in fact, if $T < \sigma(T)$, then using neutrality we obtain the contradiction
	\[ T < \sigma(T) < \sigma(\sigma(T)) < \dotsb < \sigma^k(T) = T, \]
	where $k$ is the order of $\sigma$.
	Therefore we can discard all the variables $m_{T, T_1, T_2}$ where $T$ is in the same $\sym(X)$-orbit of $T_1$ (respectively $T_2$) and $T \neq T_1$ (respectively $T \neq T_2$), because their value must be $0$ (recall that, if $m_{T, T_1, T_2} = 1$, then $T \leq T_1$ and $T \leq T_2$).
	Finally, we can discard in advance all the variables appearing in (7), because their value is $0$.
\end{remark}

With the optimizations described above, we end up with a model having 15,878 boolean variables.
As for extension stability, we use Gurobi \citep{gurobi} to check that there is no assignment of the variables in $\M$ for which all the previous constraints are satisfied.
Our program runs in approximately 8 minutes on a laptop with an Intel Core i7 processor ($8 \times 2.80$ GHz) and 16 GB of RAM.

\section{Consensus methods on small sets of taxa}
\label{sec:small-sets}

For every set $X$ of at most 4 taxa there is a unique regular associative consensus function on $X$ that is Pareto on rooted triples, namely Adams consensus. %
This can be checked using a variant of the program described in the Appendix.
The corresponding partial order relation $\leq$ discussed earlier
is represented in Figures \ref{fig:adams3} and \ref{fig:adams4}, for $X=\{1,2,3\}$ and $X=\{1,2,3,4\}$ respectively.

Notice that Adams consensus on a set $X$ of (at most) 4 taxa also satisfies extension stability, not only on binary trees, see \citep{Steel_1}.
However, it is not the only consensus method on $X$ which satisfies extension stability on binary trees.

\newcommand{\treeA}{
	\begin{tikzpicture}[x=1em,y=1em]
	\node (R) at (0,2) {};
	\node (1) at (-1.5,0) {\strut 1};
	\node (2) at (0,0) {\strut 2};
	\node (3) at (1.5,0) {\strut 3};
	\draw (1.north) -- (R.center);
	\draw (2.north) -- (R.center);
	\draw (3.north) -- (R.center);
	\end{tikzpicture}
}

\newcommand{\treeB}[3]{
	\begin{tikzpicture}[x=1em,y=1em]
	\node (R) at (0,3) {};
	\node (A) at (-0.75,2) {};
	\node (1) at (-1.5,0) {\strut #1};
	\node (2) at (0,0) {\strut #2};
	\node (3) at (1.5,0) {\strut #3};
	\draw (1.north) -- (A.center);
	\draw (2.north) -- (A.center);
	\draw (3.north) -- (R.center);
	\draw (A.center) -- (R.center);
	\end{tikzpicture}
}

\begin{figure}[p]
	{\small
		\begin{center}
			\begin{tikzpicture}
			\node (A) at (0,0) {\treeA};
			\node (B1) at (-2,2.2) {\treeB{1}{2}{3}};
			\node (B2) at (0,2.2) {\treeB{1}{3}{2}};
			\node (B3) at (2,2.2) {\treeB{2}{3}{1}};
			\draw (0,0.7) -- (B1.south);
			\draw (0,0.7) -- (B2.south);
			\draw (0,0.7) -- (B3.south);
			\end{tikzpicture}
		\end{center}
	}
	\caption{Partial order $\leq$ associated with Adams consensus, for $X=\{1,2,3\}$. The consensus tree of $T_1$ and $T_2$ is the highest common descendant of $T_1$ and $T_2$.}
	\label{fig:adams3}
\end{figure}
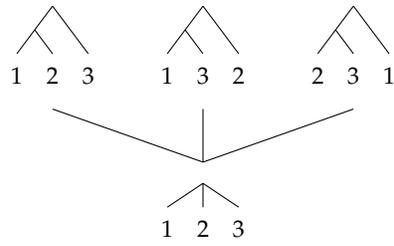

\newcommand{\leaves}[4]{
	\node (1) at (-1.8,0) {\strut #1};
	\node (2) at (-0.6,0) {\strut #2};
	\node (3) at (0.6,0) {\strut #3};
	\node (4) at (1.8,0) {\strut #4};
}

\newcommand{\treeC}{
	\begin{tikzpicture}[x=1em,y=1em]
	\node (R) at (0,2.1) {};
	\leaves{1}{2}{3}{4}
	\draw (1.north) -- (R.center);
	\draw (2.north) -- (R.center);
	\draw (3.north) -- (R.center);
	\draw (4.north) -- (R.center);
	\end{tikzpicture}
}

\newcommand{\treeD}[4]{
	\begin{tikzpicture}[x=1em,y=1em]
	\node (R) at (0,3) {};
	\node (A) at (-1.2,1.7) {};
	\leaves{#1}{#2}{#3}{#4}
	\draw (1.north) -- (A.center);
	\draw (2.north) -- (A.center);
	\draw (3.north) -- (R.center);
	\draw (4.north) -- (R.center);
	\draw (A.center) -- (R.center);
	\end{tikzpicture}
}

\newcommand{\treeE}[4]{
	\begin{tikzpicture}[x=1em,y=1em]
	\node (R) at (0,3) {};
	\node (A) at (-1.2,1.7) {};
	\node (B) at (1.2,1.7) {};
	\leaves{#1}{#2}{#3}{#4}
	\draw (1.north) -- (A.center);
	\draw (2.north) -- (A.center);
	\draw (3.north) -- (B.center);
	\draw (4.north) -- (B.center);
	\draw (A.center) -- (R.center);
	\draw (B.center) -- (R.center);
	\end{tikzpicture}
}

\newcommand{\treeF}[4]{
	\begin{tikzpicture}[x=1em,y=1em]
	\node (R) at (0.1,2.7) {};
	\node (A) at (-0.6,2.1) {};
	\leaves{#1}{#2}{#3}{#4}
	\draw (1.north) -- (A.center);
	\draw (2.north) -- (A.center);
	\draw (3.north) -- (A.center);
	\draw (4.north) -- (R.center);
	\draw (A.center) -- (R.center);
	\end{tikzpicture}
}

\newcommand{\treeG}[4]{
	\begin{tikzpicture}[x=1em,y=1em]
	\node (R) at (0,3.1) {};
	\node (A) at (-0.6,2.4) {};
	\node (B) at (-1.2,1.7) {};
	\leaves{#1}{#2}{#3}{#4}
	\draw (1.north) -- (B.center);
	\draw (2.north) -- (B.center);
	\draw (3.north) -- (A.center);
	\draw (4.north) -- (R.center);
	\draw (A.center) -- (R.center);
	\draw (B.center) -- (A.center);
	\end{tikzpicture}
}

\begin{figure}[p]
	{\small
		\begin{center}
			\bigskip
			\begin{tikzpicture}
			\node (A) at (0,0) {\treeC};
			\node (B1) at (-4.25,2.6) {\treeD{1}{2}{3}{4}};
			\node (B2) at (-2.55,2.6) {\treeD{1}{3}{2}{4}};
			\node (B3) at (-0.85,2.6) {\treeD{1}{4}{2}{3}};
			\node (B4) at (0.85,2.6) {\treeD{2}{3}{1}{4}};
			\node (B5) at (2.55,2.6) {\treeD{2}{4}{1}{3}};
			\node (B6) at (4.25,2.6) {\treeD{3}{4}{1}{2}};
			\node (C1) at (-5.1,5.2) {\treeF{1}{2}{3}{4}};
			\node (C2) at (-3.4,5.2) {\treeF{1}{2}{4}{3}};
			\node (C3) at (3.4,5.2) {\treeF{1}{3}{4}{2}};
			\node (C4) at (5.1,5.2) {\treeF{2}{3}{4}{1}};
			\node (D1) at (1.7,5.2) {\treeE{1}{2}{3}{4}};
			\node (D2) at (-1.7,5.2) {\treeE{1}{3}{2}{4}};
			\node (D3) at (0,5.2) {\treeE{1}{4}{2}{3}};
			\node (E1) at (-3,7.8) {\treeG{$x$}{$y$}{$z$}{4}};
			\node (E2) at (-1,7.8) {\treeG{$x$}{$y$}{$z$}{3}};
			\node (E3) at (1,7.8) {\treeG{$x$}{$y$}{$z$}{2}};
			\node (E4) at (3,7.8) {\treeG{$x$}{$y$}{$z$}{1}};
			\draw (0,0.6) -- (B1.south);
			\draw (0,0.6) -- (B2.south);
			\draw (0,0.6) -- (B3.south);
			\draw (0,0.6) -- (B4.south);
			\draw (0,0.6) -- (B5.south);
			\draw (0,0.6) -- (B6.south);
			\draw (B1.north) -- (C1.south);
			\draw (B1.north) -- (C2.south);
			\draw (B2.north) -- (C1.south);
			\draw (B2.north) -- (C3.south);
			\draw (B3.north) -- (C2.south);
			\draw (B3.north) -- (C3.south);
			\draw (B4.north) -- (C1.south);
			\draw (B4.north) -- (C4.south);
			\draw (B5.north) -- (C2.south);
			\draw (B5.north) -- (C4.south);
			\draw (B6.north) -- (C3.south);
			\draw (B6.north) -- (C4.south);
			\draw (B1.north) -- (D1.south);
			\draw (B2.north) -- (D2.south);
			\draw (B3.north) -- (D3.south);
			\draw (B4.north) -- (D3.south);
			\draw (B5.north) -- (D2.south);
			\draw (B6.north) -- (D1.south);
			\draw (-4.9,5.9) -- (E1.south);
			\draw (-3.2,5.9) -- (E2.south);
			\draw (C3.north) -- (E3.south);
			\draw (C4.north) -- (E4.south);
			\end{tikzpicture}
		\end{center}
	}
	\caption{Partial order $\leq$ associated with Adams consensus, for $X=\{1,2,3,4\}$.
		Each of the four elements on the top represents three different trees, obtained by choosing the values of $x,y,z$ in all possible ways.}
	\label{fig:adams4}
\end{figure}
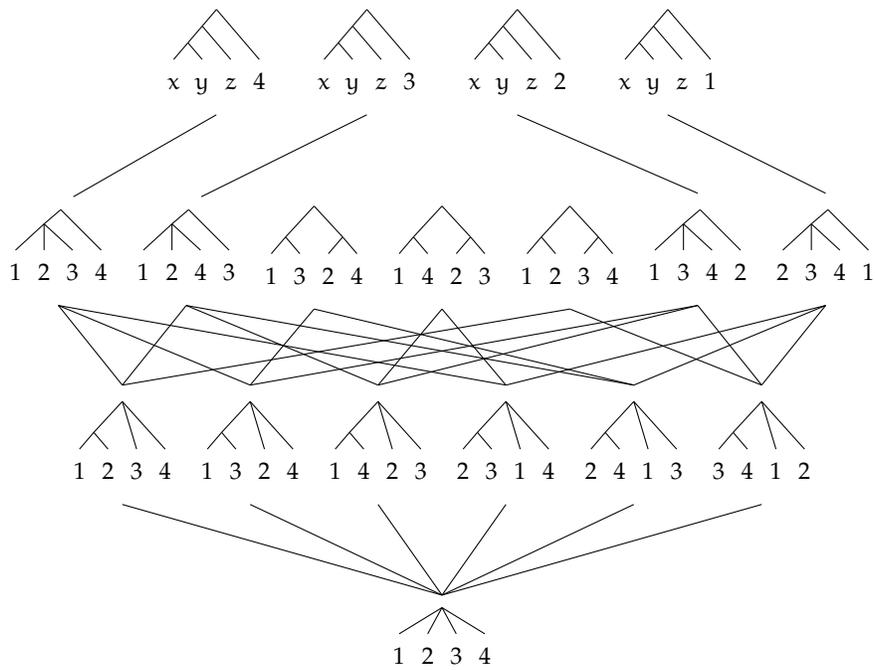

\section*{Acknowledgements} Emanuele Delucchi and Linard Hoessly were supported by the Swiss National Science Foundation Professorship grant PP00P2\_179110/1.

\bibliography{references} 
\bibliographystyle{plain}
 
\end{document}